\newcommand{\nexta}[1]{\langle #1 \rangle}
\newcommand{\Uu}{\mathcal{U}}
\newcommand{\moveto}[1]{{\stackrel{#1}{\longrightarrow}}_T}
\newcommand{\conf}{\mathcal{C}_T}
\newcommand{\View}{\downarrow\!\!}             %
\newcommand{\True}{\mathrm{tt}}
\newcommand{\False}{\mathrm{ff}}
\newcommand{\LTL}{\ensuremath{\mathrm{LTL}}\xspace}
\newcommand{\nega}{\ensuremath{{\neg}}}
\newcommand{\langP}[1]{{\mathcal{L}(#1)}}
\newcommand{\TR}{\mathbb{TR}}
\newcommand{\letters}[1]{alph(#1)}
\newcommand{\con}[1]{c_{#1}}
\newcommand{\prf}{\mathrm{prf}}
\newcommand{\lin}{\mathrm{lin}}
\newcommand{\linP}[1]{\mathrm{lin}(#1)}
\newcommand{\proj}{{\upharpoonright}}
\newcommand{\alphabet}{\ensuremath{\Sigma}}         %
\newcommand{\ir}{\ensuremath{I}}                    %
\newcommand{\w}{\ensuremath{w}}                     %
\newcommand{\E}{\ensuremath{E}}
\newcommand{\T}{\ensuremath{T}}         %
\newcommand{\cL}{\mathcal{L}}
\newcommand{\cPhat}{{\hat{\mathcal{P}}}}
\newcommand{\LTLthree}{\ensuremath{{\textrm{LTL}_3}}\xspace}
\newcommand{\LTLthreeTraces}{\ensuremath{{\textrm{LTrL}_3}}\xspace}
\newcommand{\B}{\mathbb{B}}
\newcommand{\Bthree}{\mathbb{B}_3}
\newcommand{\Inf}{\textrm{Inf}}
\renewcommand{\E}{\hat{F}}
\renewcommand{\dE}{\tilde{F}}
\newcommand{\nfaA}{\hat{\cA}}
\newcommand{\dfaA}{\tilde{\cA}}
\newcommand{\fsmA}{\cB}
\newcommand{\fQ}{\bar{Q}}
\newcommand{\fq}{\bar{q}}
\newcommand{\fdelta}{\bar{\delta}}
\newcommand{\fL}{\bar{\lambda}}
\newcommand{\Sem}[2]{{[#1 \models #2]}}
\newcommand{\SemPredPP}[2]{{[#1 \models _{\cPhat} #2]}}
\newcommand{\SemTracePP}[2]{{[#1 \models ^{t} #2]}}
\renewcommand{\phi}{\varphi}
\newcommand{\nphi}{{\neg\phi}}
\newcommand{\weg}[1]{}
\lstdefinelanguage{forLTL}
{morekeywords={
  let, in, true, false,
  always, historically, alwaysinpast,
  between, betweeninpast,
  eventually, once, eventuallyinpast,
  from, after, frominpast,
  holding, holdinginpast,
  never, neverinpast,
  next, previous, nextinpast,
  nextn, previousn, nextninpast,
  occurring, occurringinpast,
  releases, triggered, releasesinpast,
  until, since, untilinpast,
  upto, before, uptoinpast,
  required, req, optional, opt, weak,
  inclusive, incl, exclusive, excl,
  and, or, implies, equals, not,
  if, then, else,
  rejecton, accepton,
  assert, declare, define,
  allof, someof, noneof, exactlyoneof,
  enumerate, list, as, in,
  with, without,
  timed,
  alw, even  %
},
sensitive=true,
morecomment=[l]{--},
morestring=[b]"
}
\begin{document}
\title{A Note on Runtime Verification of Concurrent Systems%
\thanks{This is a preliminary version. The final version is published within the proceedings of PFQA 2025 as LNCS by Springer.}
}

\author{Martin Leucker\orcidlink{0000-0002-3696-9222}}

\institute{%
Institute for Software Engineering and Programming Languages\\ University of Lübeck\\Germany \\
\email{leucker@isp.uni-luebeck.de}\\
}
\maketitle              %
\begin{abstract}
To maximize the information gained from a single execution when verifying a concurrent system, one can derive all concurrency-aware equivalent executions and check them against linear specifications. 
This paper offers an alternative perspective on verification of concurrent systems by leveraging trace-based logics rather than sequence-based formalisms. Linear Temporal Logic over Mazurkiewicz Traces (LTrL) operates on partial-order representations of executions, meaning that once a single execution is specified, all equivalent interleavings are implicitly considered. 
This paper introduces a three valued version of LTrL, indicating whether the so-far observed execution of the concurrent system is one of correct, incorrect or inconclusive, together with a suitable monitor synthesis procedure. To this end, the paper recalls a construction of trace-consistent Büchi automata for LTrL formulas 
and explains how to employ it in well-understood monitor synthesis procedures. In this way, a monitor results that yields for any linearization of an observed trace the same verification verdict.

\keywords{Runtime Verification \and Concurrency \and Mazurkiewicz Traces}
\end{abstract}

\section{Introduction}

Following \cite{DBLP:journals/jlp/LeuckerS09}, \emph{runtime verification} is a lightweight verification technique that checks whether a system execution complies with a formally specified correctness property. Given such a property, a \emph{monitor} is synthesized, typically in the form of an automaton, which observes the system’s execution either offline or online (in real-time). In online runtime verification, the monitor incrementally processes observed events as the system runs and determines whether the execution satisfies or violates the correctness property. 
Unlike exhaustive verification techniques such as model checking, runtime verification provides \emph{on-the-fly analysis}, making it particularly useful for detecting issues in complex, concurrent, or distributed systems that are difficult to analyse statically. 

\emph{Concurrent systems} exhibit behaviours that are influenced by their \emph{environment}, including factors such as a \emph{scheduler}, \emph{system timing}, and \emph{resource availability}. In other words, unlike sequential systems, concurrent systems consist of \emph{independent threads or processes} that may share a processor, leading to executions where their actions occur in \emph{different orders} depending on scheduling decisions. This \emph{non-determinism} means that the same program can exhibit multiple behaviours across different executions. A fundamental property of concurrency is \emph{independence}: if two actions $a$ and $b$ are independent, observing execution $a$ followed by $b$ implies that an alternative execution order, $b$ followed by $a$, was also possible. This notion of \emph{equivalence between interleavings} is essential in reasoning about \emph{correctness}
in concurrent systems.

In a series of papers \cite{DBLP:journals/pacmpl/AngM24,DBLP:conf/cav/AngM24,DBLP:journals/pacmpl/0001P021}, runtime verification of concurrent systems was explored along the following lines. A correctness property\footnote{Correctness and incorrectness are dual notions here, so there is no significant difference in which one is explicitly defined.} $\phi$ is specified, which identifies when a single sequence is incorrect. Additionally, independence relations in concurrent systems were studied, allowing for the generalization from a single observation to potential alternative interleavings of the observed actions, as permitted by the given equivalence relation. 
The monitor reports success (or failure) if any of the considered interleavings are identified by the monitor.\footnote{\cite{DBLP:journals/pacmpl/AngM24,DBLP:conf/cav/AngM24,DBLP:journals/pacmpl/0001P021} use the term $\emph{predictive runtime verification}$ as further observations are \emph{predicted} from one observation. However, the term \emph{predictive runtime verification} was used with a different meaning in \cite{DBLP:conf/nfm/ZhangLD12}, where it referred to using the underlying program to predict how the currently observed behaviour might evolve. 
}

The general idea is sound; however, we propose to specify the correctness property while explicitly considering the concurrent nature of the system. Rather than defining correctness based on a particular execution order, the specification should be \emph{interleaving-independent}, ensuring that it captures the intended concurrent behaviour regardless of how independent actions are scheduled. The monitor, in turn, must account for all possible interleavings permitted by the system’s concurrency model. By doing so, the verification process remains robust, correctly identifying violations or confirmations of the correctness property across all valid execution orders. This paper provides an example of how to address the monitoring of concurrent systems using a trace logic that treats interleaving as a first-class citizen.  

We base our explanation on LTrL. Thiagarajan and Walukiewicz introduced LTrL, which is interpreted over partial-order representations of traces \cite{ThiagarajanW1997,DBLP:journals/iandc/ThiagarajanW02}. More specifically, it is defined over \emph{Mazurkiewicz traces} \cite{Mazurkiewicz88}, a special class of partial orders that respect independence between actions, as made precise in subsequent sections. It has been shown that LTrL is expressively equivalent to the first-order theory of traces when interpreted over (finite and) infinite traces.  

Each trace $T$ can be linearized in multiple ways, and we define $\text{lin}(T)$ as the set of all such linearizations. Clearly, $\text{lin}(T)$ is \emph{trace-closed}, meaning that it consists of all sequences obtained by taking one linearization of $T$ and applying all possible permutations of independent actions. Furthermore, LTrL provides a characterization of so-called \emph{trace-consistent} (or robust) LTL specifications. In other words, the models of any LTrL formula $\varphi$ define trace-closed languages.  

As such, our approach proceeds as follows:  
\begin{itemize}  
    \item Use LTrL to specify correctness properties to be monitored.  
    \item Synthesize a monitor that accepts all linearizations of traces satisfying the given correctness property.  
\end{itemize}  

While the general scheme from \cite{DBLP:conf/atva/DongLS08} is applicable to monitoring various linear-time temporal logics, particularly those that employ automata-based techniques to capture the models of a given formula, we explicitly provide the individual steps for clarity. Specifically, we build on existing methods to construct Büchi automata that accept all linearizations satisfying a given LTrL formula \cite{DBLP:conf/time/BolligL01,DBLP:journals/dke/BolligL03}.

Furthermore, we analyse the complexity of this approach. We show the complexity of monitors in non-elementary in the nesting of until-formulas, but, at the same time, optimal. Moreover, we discuss potential practical optimizations. 

The paper is organized as follows: We recall the concepts of words and automata in the next section. Mazurkiewicz traces are recalled in Section~\ref{sec:mazurkiewiczTraces} while LTrL for Mazurkiewicz traces is given in Section~\ref{sec:ltlForTraces}. The main contribution of the paper is provided in Section~\ref{sec:ltlThreeForTraces} which introduces a three-valued version of the LTrL together with a suitable monitor synthesis procedure. Section~\ref{sec:discussion} provides a short discussion on Mazurkiewicz trace logics in verification. In Section~\ref{sec:conclusion} we draw the conclusion of our approach and give directions for future work.

\section{Words and Automata}

For the remainder of this paper, 
let us fix an alphabet, i.e.\ a non-empty finite set, 
$\Sigma$. We write $a, a_i$ for any single
element of $\Sigma$ and sometimes call it an action.
Finite words over $\Sigma$ are elements of $\Sigma^\ast$, and are
usually denoted by $u, u', v, v', u_1, u_2, \dots$, whereas infinite words
are elements of $\Sigma^\omega$, usually denoted by $w, w', w_1, w_2,
\dots$.  We let $\Sigma^\infty$ denote the union of finite and infinite 
words. The empty word is denoted by $\epsilon$. 
Finally, we take $\prf(w)$ to be the set of finite
prefixes of $w$ and let $\letters{w}$ denote the set of actions
occurring in $w$.

A (nondeterministic) B\"uchi automaton (NBA) is a tuple $\cA =
(\Sigma, Q, Q_0, \delta, F)$, where $\Sigma$ is a finite alphabet, $Q$
is a finite non-empty set of states, $Q_0 \subseteq Q$ is a set of initial
states, $\delta: Q \times \Sigma \rightarrow 2^Q$ is the transition
function, and $F\subseteq Q$ is a set of accepting states. We extend
the transition function $\delta: Q \times \Sigma \rightarrow 2^Q$ to
sets of states and (input) words as
usual.
A \emph{run} of an automaton $\cA$ on a word $w = a_1\ldots \in
\Sigma^\omega$ is a sequence of states and actions $\rho =
q_0a_1q_1\dots$, where $q_0$ is an initial state of $\cA$ and for all
$i \in \N$ we have $q_{i+1} \in \delta(q_i, a_i)$.  For a run $\rho$,
let $\Inf(\rho)$ denote the states visited infinitely often. 
$\rho$ is called \Def{accepting} iff
$\Inf(\rho) \cap F \neq \emptyset$. 

A nondeterministic \Def{finite automaton} (NFA) $\cA = (\Sigma, Q,
Q_0, \delta, F)$, where $\Sigma$, $Q$, $Q_0$, $\delta$, and $F$
are defined as for a B\"uchi automaton, operates on finite
words. A \emph{run} of $\cA$ on a word $u = a_1\ldots a_n \in
\Sigma^\ast$ is a sequence of states and actions $\rho =
q_0a_1q_1\dots q_n$, where $q_0$ is an initial state of $\cA$ and for
all $i \in \N$ we have $q_{i+1} \in \delta(q_i, a_i)$.  The run is
called accepting if $q_n \in F$.
A NFA is called \Def{deterministic} and denoted DFA, iff for all $q
\in Q$, $a \in \Sigma$, $|\delta(q, a)| = 1$, and $|Q_0| = 1$.

As usual, the language accepted by an automaton (NBA/NFA/DFA), denoted
by $\cL(\cA)$, is given by its set of accepted words.

A \Def{Moore machine} (also
\Def{finite-state machine}, FSM) is a finite state automaton
enriched with output, formally denoted by a tuple $(\Sigma, Q, Q_0,
\delta, \Delta, \lambda)$, where $\Sigma$, $Q$, $Q_0 \subseteq Q$,
$\delta$ is as before and $\Delta$ is the output alphabet, $\lambda: Q
\rightarrow \Delta$ the output function.
As before, $\delta$ extends to the domain of words
as expected. 
Moreover, we denote by $\lambda$ also the
function that applied to a word $u$ yields the output in the state
reached by $u$ rather than the sequence of outputs.
 
\section{Mazurkiewic Traces}
\label{sec:mazurkiewiczTraces}

Following \cite{DBLP:journals/dke/BolligL03}, a \emph{(Mazurkiewicz) trace alphabet\/} is a pair $(\Sigma,I)$, where
$\Sigma$ is an alphabet and $I \subseteq \Sigma \times
\Sigma$ is an irreflexive and symmetric \emph{independence relation}.
Usually, $\Sigma$ consists of the \emph{actions} performed by a
distributed system while $I$ captures a static notion of causal
independence between actions. We define $D=(\Sigma \times \Sigma) - I$
to be the \emph{dependency relation}, which is then reflexive and
symmetric.

For the rest of the section, we fix a trace alphabet $(\Sigma,I)$. We
will use $a I b$ to denote that the actions $a$ and $b$ are
independent, \ie that $(a,b) \in I$, and use similar notation for
$(a,b)\in D$. We extend the notion to sets of actions $X, Y \subseteq
\Sigma$, and let $X I Y$ denote that each pair of actions
$a\in X$ and $b\in Y$ is independent. Moreover, $X D Y$ will denote
that $X$ is dependent on $Y$, \ie that there exists a pair of actions
$a \in X$ and $b \in Y$ with $a$ and $b$ dependent.  For convenience,
we will write $\{a\} I Y$ as $a I Y$ etc.

For the purpose of interpreting a linear temporal logice over traces, we will adopt the
viewpoint that traces are restricted labelled partial orders of events
and hence have an explicit representation of causality and
concurrency.

Let $T=(E,\leq,\lambda)$ be a $\Sigma$-labelled 
poset, \ie $(E,\leq)$ is a poset and $\lambda :
E \rightarrow \Sigma$ is a labelling function.
$\lambda$ can be extended to subsets of $E$ in the
straightforward manner. For $e\in E$, we define ${\downarrow} e=\{x\in
E \mid x\leq e \}$ and ${\uparrow} e=\{x\in E \mid e\leq x \}$.
We let $\lessdot$ be the \emph{covering
  relation} given by $x\lessdot y$ iff $x<y$ and for all $z\in E$,
$x\leq z\leq y$ implies $x=z$ or $z=y$.

A \emph{(Mazurkiewicz) trace} over $(\Sigma,I)$ is a $\Sigma$-labelled
poset $T=(E,\leq,\lambda)$ satisfying:
\begin{itemize}
\item
  $\View e$ is a finite set for each $e\in E$.
\item
  For every $e,e'\in E$, $e\lessdot e'$ implies $\lambda(e) \;D\;
  \lambda(e')$.
\item
  For every $e,e'\in E$, $\lambda(e) \;D\; \lambda(e')$ implies $e\leq
  e'$ or $e'\leq e$. 
\end{itemize} 

We let $\TR(\Sigma,I)$ denote the class of traces over
$(\Sigma,I)$. A trace language $L$ is a subset of traces,
\ie $L\subseteq\TR(\Sigma,I)$.  Throughout the paper, we will not
distinguish between isomorphic elements in $\TR(\Sigma,I)$. We will
refer to members of $E$ as \emph{events}.

Let $T=(E,\leq,\lambda)$ be a trace over $(\Sigma,I)$. 
A \emph{configuration} of $T$ is a finite
subset of events $c\subseteq E$ with ${\downarrow}c=c$ where
${\downarrow}c = \bigcup_{e \in c} {\downarrow}e$. The set of
configurations of $T$ will be denoted $\conf$. Trivially,
$\emptyset\in\conf$ for any trace 
$T$. %
$\conf$ can be equipped with a transition relation
${\longrightarrow_T}\subseteq\conf\times\Sigma\times\conf$ given by
$c\moveto{a} c'$ iff there exists an $e\in E$ such that
$\lambda(e)=a$, $e\not\in c$, and $c'=c\cup \{e\}$. Configurations of
$\conf$ are the trace-theoretic analogues of finite prefixes of
words. As will become apparent in Section~\ref{sec_logic}, the
formulas of our logic are to be interpreted at configurations of traces.

In its original formulation~\cite{Maz77}, Mazurkiewicz introduced
traces as certain equivalence classes of words, and this
correspondence turns out to be essential for our developments here. 
Let $T=(E,\leq,\lambda)\in \TR(\Sigma,I)$.  Then $w\in
\Sigma^\infty$ is a \emph{linearisation\/} of $T$ iff there exists a
map $\rho: \prf(w)\rightarrow \conf$ such that the following
conditions are met:
\begin{itemize}
\item
  $\rho(\varepsilon)=\emptyset$.
\item
  $\rho(v) \moveto{a} \rho(va)$ for each $va\in \prf(w)$.
\item
  For every $e\in E$, there exists some $u\in \prf(w)$ such that
  $e\in \rho(u)$. 
\end{itemize}
The function $\rho$ will be called a \emph{run map\/} of the
linearisation $w$. Note that the run map of a linearisation is unique.
In what follows, we shall take $\lin(T)$ to be the \Def{set of
  linearisations} of the trace $T$.

A set $p \subseteq \Sigma$ is called a \emph{$D$-clique\/} iff
$p \times p \subseteq D$. The equivalence relation ${\approx}
\subseteq \Sigma^{\infty} \times \Sigma^{\infty}$ induced by $I$ is
given by: $w \approx w'$ iff $w {\proj} p = w' {\proj} p$
for every $D$-clique $p$. Here and elsewhere, if $X\subseteq\Sigma$,
$w {\proj} X$ is the sequence obtained by erasing from $\w$ all
occurrences of letters in $\Sigma-X$. We take $[w]_{\approx}$ to
denote the ${\approx}$-equivalence class of $w\in\Sigma^\infty$.

It is not hard to show that elements of $\TR(\Sigma,I)$ and
${\approx}$-equivalence classes are two representations of the same
object: A labelled partial-order $T\in\TR(\Sigma,I)$ is represented by
$\lin(T)$ and vice versa (see~also\cite{DiekertRozenberg95}).  We exploit this
duality of representation and let $T_w$ denote the (unique) trace corresponding
to $[w]_{\approx}$. Moreover, for each $v\in\prf(w)$ we will use $c_v$
to denote the configuration of $\mathcal{C}_{T_w}$ given by $\rho(v)$.

\begin{figure}
  \begin{center}

\begin{tikzpicture}[scale=0.3, node distance=1mm, every node/.style={draw, minimum size=1.8em, inner sep=1pt}]

    \node (A0) at (2,2) {$a$};
    \node (B0) at (8,2) {$b$};
    \node (D0) at (5,6) {$d$};
    \node (A1) at (2,10) {$a$};
    \node (B1) at (8,10) {$b$};
    \node (B2) at (8,14) {$b$};
    \node (D1) at (5,18) {$d$};

    \draw[->] (A0) -- (D0);
    \draw[->] (B0) -- (D0);
    \draw[->] (D0) -- (A1);
    \draw[->] (D0) -- (B1);
    \draw[->] (A1) -- (D1);
    \draw[->] (B1) -- (B2);
    \draw[->] (B2) -- (D1);

    \draw[thick] plot [smooth] coordinates {(0,10.9) (5,12.2)(10,10.9)};
    \node[draw opacity=0] at (5,12.9) {$c$};

    \draw (0,18) -- (0,0) -- (10,0) -- (10,18);
    \draw[thick] (0,10.9) -- (0,0) -- (10,0) -- (10,10.9);
\end{tikzpicture}

  \caption{A trace over $(\Sigma,I)$.}
  \label{fig:trace}
 
  \end{center}
\end{figure}
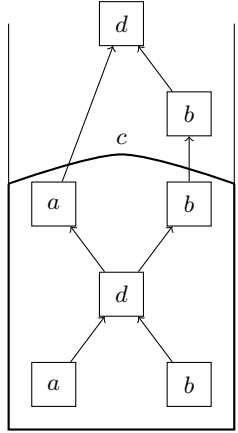

To illustrate these concepts, consider the trace alphabet $(\Sigma,I)$
with $\Sigma=\{a,b,d\}$ and $I=\{(a,b),(b,a)\}$. An example trace $T$
over $(\Sigma,I)$ is depicted in Figure~\ref{fig:trace} with smaller
elements (with respect to ${\leq}$) appearing below larger elements.
Furthermore, it can easily be verified that $abdbabd\in\lin(T)$ so
$T=T_{abdbabd}$, but $adabbbd\not\in\lin(T)$. The configuration
$c\in\conf$ consists of the first two $a$'s, first $d$, first two
$b$'s, and is also denoted by $c_{abdab}$, which is identical to
$c_{badab}$ as $abdab\approx badab$.

We transfer considering traces as equivalence classes to the level of
languages and call a word language $L \subseteq \Sigma^\omega$
\Def{trace closed} iff for all words $w,w' \in \Sigma^\omega$ with
$w \approx w'$, it holds $w \in L$ iff $w' \in L$.

Finally, we call a B\"uchi automaton, NFA, DFA etc.\ $\cA$
\Def{trace closed} if its language $\mathcal{L}(\cA)$ is
trace closed. 

\section{LTL for Mazurkiewicz Traces}

\label{sec_logic}
\label{sec:ltlForTraces}

In this section, we bring out the syntax and semantics of the linear
temporal logic LTL, which will be our basic object of study.  It was
originally introduced for words by Pnueli~\cite{Pnueli77}. It was
later equipped with a trace semantics~\cite{ThiagarajanW1997} and some additional operators, and termed LTrL.  Diekert and Gastin~\cite{DiekertGastin00} have shown that LTL with the same trace semantics but without these additional operators is already expressively equivalent to first-order logic for traces. As such, we will use this version in the following. Note that we use LTrL to highlight that we consider Mazurkiewicz traces but LTL when considering words. However, formally one can combine the two approaches using a parameterized version of LTL:

The formulas of $\LTL$ are parameterised by a trace alphabet
$(\Sigma,I)$ and are defined inductively as follows:
$$
\LTL(\Sigma,I) :: = \True \mid \nega\phi \mid \phi \vee \psi\mid
\nexta{a}\phi \mid \phi \Uu \psi~,~~a\in\Sigma.
$$
\noindent Formulas of $\LTL(\Sigma,I)$ are interpreted over
configurations of traces over $(\Sigma,I)$. More precisely,
given a trace $T\in\TR(\Sigma,I)$, a configuration
$c\in\mathcal{C}_T$, and a formula $\phi\in\LTL(\Sigma, I)$,
the notion of $T, c \models \phi$ is defined inductively via:
\begin{itemize}
\item $T,c \models \True$.
\item $T,c \models \nega\phi$ iff $T,c \not\models \phi$.
\item $T,c \models \phi \vee \psi$ iff $T,c \models \phi$ or $T,c
  \models \psi$.
\item $T,c \models \nexta{a}\phi$ iff there exists a $c'\in\conf$
  such that $c\moveto{a}c'$ and $T,c' \models \phi$.
\item $T,c \models \phi\Uu\psi$ iff there exists a $c'\in\conf$
  with $c\subseteq c'$ such that $T, c' \models \psi$ and all $c''
  \in \conf$ with $c \subseteq c'' \subset c'$ satisfy $\phi$.
\end{itemize}
\noindent
We will freely use the standard abbreviations such as e.g.
$\False=\neg\True$, $\varphi\wedge\psi = \neg(\neg\varphi \vee
\neg\psi)$. Furthermore, we sometimes abbreviate $T, \emptyset \models
\phi$ by $T \models \phi$. All models of a formula $\phi \in
\LTL(\alphabet,\ir)$ constitute a subset of $\TR(\Sigma,I)$, thus a
language. It is denoted by $\langP{\phi}$ and is called the language
\Def{defined} by $\phi$. Furthermore, every formula defines an
$\omega$-language \viz the set $\bigcup \{ \linP{\T} \mid \T \models
\phi \}$, which is also indicated by $\langP{\phi}$.

A simple example of a formula of LTL is
$\varphi=\nexta{a}\nexta{b}\psi$. Note that for the trace from
Figure~\ref{fig:trace}, it holds that $T \models \varphi$ if and only
if $T,\con{ab} \models \psi$. Moreover, $\varphi$ is equivalent to
$\varphi'=\nexta{b}\nexta{a}\psi$ over this particular trace alphabet
because $a I b$, \ie the models of $\varphi$ and $\varphi'$ 
coincide.

We note that, in case of the empty independence relation,
$\LTL(\Sigma, I)$ and $\LTL$ interpreted over words 
coincide in the expected manner. Thus, we identify LTL
over words with $\LTL(\Sigma, \emptyset)$
and save the work of formally introducing LTL over
words.\footnote{Typically, LTL is introduced using atomic propositions rather than labelled actions. This difference should not bother us any further.}

Note that an LTL formula $\phi$ (considered over the alphabet $(\Sigma, \emptyset)$) is called \emph{trace consistent wrt.\ the trace alphabet $(\Sigma,I)$}, if $\cL(\phi)$ is trace closed wrt.\ $(\Sigma, I)$. In other words, $\phi$ is called trace consistent iff it cannot distinguish between two equivalent linearizations of a trace. All LTL$(\Sigma, I)$ formulas are trace consistent by definition. In simple words, every  LTL$(\Sigma, I)$ formula respects the given concurrency relation.

We conclude this section recalling the result that each $\LTL(\Sigma, I)$ formula $\phi$ can be translated into a corresponding (alternating) Büchi automaton accepting all linearizations of all traces satisfying $\phi$. As such, the language is especially trace closed.\footnote{While a non-deterministic automaton maps a given state and an input action to a set of possible next states, an alternating automaton maps to a Boolean combination of states \cite{Chandra:1981:A}. It is easy to see that every alternating Büchi automaton can be translated into a non-deterministic Büchi automaton accepting the same language, involving an exponential blow-up \cite{MiyanoH84}. While in the seminal work by Vardi and Wolper \cite{VardiWolper86}, an LTL formula is translated into a non-deterministic Büchi automaton directly, Vardi presented a translation via alternating automata \cite{Vardi96}, which, arguably, conceptually simplifies the translation into two steps: Translating LTL into an automaton that supports conjunctions of states, and eliminating conjunction when translating alternation into non-determinism. \cite{DBLP:journals/dke/BolligL03} follows the latter route, but adapted to LTrL. Hence, strictly speaking \cite{DBLP:journals/dke/BolligL03} only states the translation result into alternating machines, which can subsequently be translated to non-determinis Büchi automata.}

\begin{theorem}[\cite{DBLP:journals/dke/BolligL03}]
  \label{thm_main}
  Let $\phi$ be a formula of $\LTL(\Sigma, I)$. There is an (alternating)  B\"uchi automaton $\cA_\phi$ such that  $\langP{\cA_\phi} =
  \langP{\phi}$. The size of $\cA_\phi$ is non-elementary in the size of $\phi$. 
\end{theorem}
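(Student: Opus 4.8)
The plan is to follow the alternating-automaton route of Vardi, adapted to the trace setting as in \cite{DBLP:journals/dke/BolligL03}. The central difficulty is that the modalities $\nexta{a}$ and $\Uu$ are interpreted over the \emph{configurations} of a trace rather than over the positions of a word, yet the automaton $\cA_\phi$ must process a one-dimensional linearisation letter by letter. The device that bridges this gap is an enriched closure $\efl(\phi)$ together with a one-step \emph{rewriting} operation $\rew(\cdot,a)$: intuitively, $\rew(\phi,a)$ is the residual obligation that must still hold at the configuration reached after an event labelled $a$ has been appended, so that the outstanding requirements are propagated correctly along the word while still referring to the underlying partial order.

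First I would extend the syntax of $\LTL(\Sigma,I)$ by \emph{indexed until-formulas} $\U$, where the sets $Y,Z\subseteq\Sigma$ record, respectively, which actions have already been scheduled inside the scope of the until and which of them have been absorbed into the witnessing configuration. I would then define $\efl(\phi)$ as the least set of (indexed) subformulas closed under $\rew(\cdot,a)$ for every $a\in\Sigma$ and argue by a structural induction that $\efl(\phi)$ is finite. The transition function of $\cA_\phi$ maps a state $\psi$ and a letter $a$ to a positive Boolean combination in $\posform{\efl(\phi)}$ obtained from $\rew(\psi,a)$; in the until case the two disjuncts are precisely the \emph{continue} step $\Ua$ and the \emph{fulfil} step $\Uaa$. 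The initial state is $\phi$, and the set of accepting states consists of all states that are \emph{not} indexed until-formulas, so that the Büchi condition (every branch of the run tree visits accepting states infinitely often) forbids a branch from remaining forever in an until-state and thereby enforces that every postponed eventuality is eventually discharged. A final Miyano--Hayashi step turns this alternating automaton into a nondeterministic one at the cost of one further exponential, which is absorbed by the non-elementary bound.

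The correctness claim $\langP{\cA_\phi}=\langP{\phi}$ I would establish by proving, by induction on formula structure, that for every trace $T$, every configuration $c\in\conf$, and every linearisation $w$ of $T$, the automaton started in state $\psi$ accepts the suffix of $w$ read from $c$ onward if and only if $T,c\models\psi$; the bookkeeping sets $Y,Z$ are threaded through the induction hypothesis precisely so that the partial-order semantics of $\Uu$ is matched by the linear unrolling. Because $T,c\models\psi$ depends only on the trace $T$ and not on the chosen linearisation, trace-closedness of $\langP{\cA_\phi}$ is automatic, and the statement that $\cA_\phi$ accepts \emph{all} of $\lin(T)$ follows directly from the run-map characterisation of linearisations. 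The size bound is then read off from $|\efl(\phi)|$: the index sets together with the iterated rewriting cause the closure to grow by an exponential at each nesting level of an until-operator, producing a tower of exponentials whose height is the until-nesting depth; since this depth can be linear in $|\phi|$, the bound is non-elementary, and it is matched by the known lower bound, so the size is optimal.

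The step I expect to be the main obstacle is the induction for the until-operator. In the pure word case one simply unrolls $\beta\Uu\gamma$ as $\gamma\vee(\beta\wedge\X(\beta\Uu\gamma))$, but over traces the witnessing configuration $c'\supseteq c$ need not be reached by a prefix of the \emph{fixed} linearisation: events independent of the until's scope may be interleaved between $c$ and $c'$ in arbitrary order. Making $\rew(\cdot,a)$ and the sets $Y,Z$ interact correctly with the independence relation $I$---so that an event independent of the scope is allowed to slip past without either prematurely fulfilling or wrongly invalidating the obligation $\U$, while a dependent event is forced to commit---is where the real work lies, and it is exactly what forces the indexed operators into the construction in the first place.
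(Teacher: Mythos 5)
The paper does not prove this theorem itself but imports it from \cite{DBLP:journals/dke/BolligL03}, and your sketch reconstructs exactly the construction of that reference (as summarised in the paper's footnote): Vardi's alternating-automaton route adapted to traces via indexed \emph{until}-formulas $\beta\,\mathcal{U}_Y^Z\,\gamma$, a one-step rewriting operation, and a finite enriched closure, followed by a Miyano--Hayashi dealternation. Your account of the configuration-indexed correctness invariant and of the non-elementary blow-up caused by iterated rewriting at each \emph{until}-nesting level matches the cited construction, so this is essentially the same approach.
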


\section{Runtime verification for LTrL}
\label{sec:ltlThreeForTraces}

Let us now introduce runtime verification for LTrL following the anticipatory approach originally introduced in \cite{BauerLS06,DBLP:journals/tosem/BauerLS11}. To this end, let us recall the 3-valued approach for LTL over words first.

\paragraph{Monitoring \LTLthree over words.}

Let us recall our 3-valued semantics, denoted by  \LTLthree, over the
set of truth values $\Bthree = \{ \bot, ?, \top\}$ from \cite{BauerLS06,DBLP:journals/tosem/BauerLS11}:
  Let $u \in \Sigma^\ast$ denote a finite word.  The \emph{truth value}
  of a \LTLthree formula $\varphi$ \wrt $u$, denoted by $\Sem{u}{\phi}$,
  is an element of $\Bthree$ defined by%
  \begin{equation*}
    \Sem{u}{\phi} = 
    \left\{
      \begin{array}{l@{\quad}l}
        \top    & \textrm{if}\ \forall{\sigma\in\Sigma^\omega}: u\sigma \models \varphi\\[0.5ex]
        \bot    & \textrm{if}\ \forall{\sigma\in\Sigma^\omega}: u\sigma \not\models \varphi\\[0.5ex]
        ?       & \textrm{otherwise}.
      \end{array}
    \right.
  \end{equation*} %

Monitor synthesis was introduced for \LTLthree in \cite{BauerLS06,DBLP:journals/tosem/BauerLS11}, and the ideas were later generalized in a systematic way to classes of linear temporal logics satisfying several additional properties in \cite{DBLP:conf/atva/DongLS08-double}. Since LTrL meets all these criteria, we can obtain a monitor for LTrL using this approach. However, to simplify the presentation, we directly illustrate how the construction for LTrL aligns with that of \LTLthree, rather than recalling the more general framework from \cite{DBLP:conf/atva/DongLS08-double}.  

As such, let us first recall the monitor synthesis approach along the lines of \cite{BauerLS06,DBLP:journals/tosem/BauerLS11}: For a given formula $\phi \in \LTL$, we construct a finite Moore
machine (FSM), $\fsmA^\phi_\cPhat$ that reads finite words $u \in
\Sigma^\ast$ and outputs $\Sem{u}{\phi} \in \Bthree$.

For an NBA $\cA$, we denote by $\cA(q)$ the NBA that coincides with
$\cA$ except for $Q_0$, which is defined as $Q_0 =
\{q\}$.
Fix $\phi \in \LTL$ for the rest of this paragraph and let $\cA^\phi=(\Sigma, Q^\phi, Q_0^\phi, \delta^\phi, F^\phi)$
denote the NBA, which accepts all models of $\phi$, and let
$\cA^\nphi=(\Sigma, Q^\nphi, Q_0^\nphi, \delta^\nphi, F^\nphi)$ denote the NBA, which accepts all counter examples of
$\phi$. The corresponding construction is standard
\cite{VardiWolper86}.

  For $u \in \Sigma^\ast$ and $\delta(Q_0^\phi, u) = \{ q_1, \dots,
  q_l \}$, we have 
 $ \Sem{u}{\phi} \neq \bot \mbox{ iff } \exists q \in \{q_1, \dots,
  q_l\}\ \text{ such that } \mathcal{L}(\cA^\phi(q)) \neq \emptyset.
$
Likewise, for the NBA $\cA^\nphi$, we have 
  for $u \in \Sigma^\ast$, and $\delta(Q_0^\nphi, u) = \{ q_1, \dots,
  q_l \}$ that
  $\Sem{u}{\phi} \neq \top \mbox{ iff } \exists q \in \{q_1, \dots,
  q_l\}$ such that $\mathcal{L}(\cA^\nphi(q)) \neq \emptyset.$

Following \cite{DBLP:journals/tosem/BauerLS11}, for $\cA^\phi$ and $\cA^\nphi$, we
now define a function $\cF^\phi : Q^\phi \to \B$ respectively
$\cF^\nphi : Q^\nphi \to \B$ (where $\B = \{\top, \bot\}$), assigning
to each state $q$ whether the language of the respective automaton
starting in state $q$ is not empty.
Thus, if $\cF^\phi(q)=\top$ holds, then the automaton $\cA^\phi$
starting at state $q$ accepts a non-empty language and each finite
prefix $u$ leading to state $q$ can be extended to an (infinite) run
satisfying $\phi$.

Using $\cF^\phi$ and $\cF^\nphi$, we turn from automata over infinite words to automata over finite words: We define two NFAs $\nfaA^\phi =
(\Sigma, Q^\phi, Q_0^\phi, \delta^\phi, \E^\phi)$ and $\nfaA^\nphi =
(\Sigma, Q^\nphi, Q_0^\nphi, \delta^\nphi, \E^\nphi)$ where
$\E^\phi = \{ q \in Q^\phi \mid \cF^\phi(q) = \top\}$ and 
$\E^\nphi = \{ q \in Q^\nphi \mid \cF^\nphi(q) = \top\}$.
Then, 
we have for all $u\in \Sigma^\ast$:%
\[
 u \in \mathcal{L}(\nfaA^\phi) \mbox{ iff }\Sem{u}{\phi} \neq \bot
 \quad\mbox{and}\quad
 u \in \mathcal{L}(\nfaA^\nphi) \mbox{ iff }\Sem{u}{\phi} \neq \top
\]
Hence, we can evaluate $\SemPredPP{u}{\phi}$ 
as follows:
\label{lem:eval}
We have 
$\Sem{u}{\phi} = \top$ if $u \not\in \mathcal{L}(\nfaA^\nphi)$,
$\Sem{u}{\phi} = \bot$ if $u \not\in \mathcal{L}(\nfaA^\phi)$, and
$\Sem{u}{\phi} = ?$ if $u \in \mathcal{L}(\nfaA^\phi) \mbox{ and } u
\in \mathcal{L}(\nfaA^\nphi)$.

As a final step, we now define a (deterministic) FSM $\fsmA^\phi$ that
outputs for each finite string $u$ and formula $\phi$ its three valued semantics.
Let $\dfaA^\phi$ and $\dfaA^\nphi$ be the deterministic versions of
$\nfaA^\phi$ and $\nfaA^\nphi$, which can be computed in
the standard manner by power-set construction.
Now, we define the FSM in question as a product of $\dfaA^\phi$ and
$\dfaA^\nphi$:

\begin{definition}[Monitor $\fsmA^\phi$ for \LTL-formula $\phi$]
  \label{def:monitor}
  Let $\dfaA^\phi=(\Sigma, Q^\phi, \{q_0^\phi\},$ $\delta^\phi, \dE^\phi)$
  and $\dfaA^\nphi=(\Sigma, Q^\nphi, \{q_0^\nphi\}, \delta^\nphi,
  \dE^\nphi)$ be the DFAs which correspond to the two NFAs $\nfaA^\phi$
  and $\nfaA^\nphi$ as defined before. %
  Then we define the \emph{monitor} $\fsmA^\phi=\dfaA^\phi
  \times \dfaA^\nphi$ for $\phi$ as the
  minimized version of the FSM $(\Sigma, \fQ, \fq_0, \fdelta, \fL)$,
  where $\fQ = Q^\phi \times Q^\nphi$, $\fq_0 = (q_0^\phi,
  q_0^\nphi)$, $\fdelta((q,q'), a) = (\delta^\phi(q, a),
  \delta^\nphi(q',a))$, and $\fL : \fQ \to \Bthree$ is defined
  by%
    \[ 
    \fL((q,q')) = \left\{
      \begin{array}{l@{\quad}l}
        \top & \mbox{if } q' \not\in \dE^\nphi\\
        \bot & \mbox{if } q  \not\in \dE^\phi\\
        ?    & \mbox{if } q \in \dE^\phi \mbox{ and } q' \in \dE^\nphi.
      \end{array}
    \right.
    \]
\end{definition}
We sum up our entire construction in Fig.~\ref{tab:proc} and conclude
with the following correctness theorem.
\begin{theorem}[\cite{BauerLS06,DBLP:journals/tosem/BauerLS11}]
  \label{thm:untimed}
  Let 
  $\phi \in \LTL$, and let $\fsmA^\phi = (\Sigma, \fQ, \fq_0, \fdelta,
  \fL)$  be the corresponding monitor.
  Then, for all $u \in \Sigma^\ast$: %
  $\Sem{u}{\phi} = \fL(\fdelta(\fq_0,u))$.
\end{theorem}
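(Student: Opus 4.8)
The statement is the capstone of the chain of equivalences assembled in the construction above, so the plan is simply to thread a fixed input $u \in \Sigma^\ast$ through both component DFAs of the product and read off the verdict via $\fL$. I would first prove the equation for the \emph{unminimised} product FSM $(\Sigma, \fQ, \fq_0, \fdelta, \fL)$ and defer the minimisation to the very end, since the theorem asserts an identity between the functions $u \mapsto \Sem{u}{\phi}$ and $u \mapsto \fL(\fdelta(\fq_0,u))$, and any operation preserving the latter function suffices.

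Fix $u$. First I would establish by a routine induction on $|u|$, using the componentwise definition $\fdelta((q,q'),a) = (\delta^\phi(q,a), \delta^\nphi(q',a))$ and the determinism of both factors, that $\fdelta(\fq_0,u) = (q, q')$ with $q = \delta^\phi(q_0^\phi, u)$ and $q' = \delta^\nphi(q_0^\nphi, u)$. Since $\dfaA^\phi$ and $\dfaA^\nphi$ are deterministic, acceptance is decided by the reached state: $q \in \dE^\phi$ iff $u \in \mathcal{L}(\dfaA^\phi)$, and $q' \in \dE^\nphi$ iff $u \in \mathcal{L}(\dfaA^\nphi)$. Because determinisation preserves the accepted language, $\mathcal{L}(\dfaA^\phi) = \mathcal{L}(\nfaA^\phi)$ and $\mathcal{L}(\dfaA^\nphi) = \mathcal{L}(\nfaA^\nphi)$. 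Combining these with the two membership characterisations already recorded, namely $u \in \mathcal{L}(\nfaA^\phi)$ iff $\Sem{u}{\phi} \neq \bot$ and $u \in \mathcal{L}(\nfaA^\nphi)$ iff $\Sem{u}{\phi} \neq \top$, yields the two bridges $q \in \dE^\phi \Leftrightarrow \Sem{u}{\phi} \neq \bot$ and $q' \in \dE^\nphi \Leftrightarrow \Sem{u}{\phi} \neq \top$.

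With these two bridges in hand, the result follows by a three-way case analysis that matches the definition of $\fL$ against the evaluation rule stated above (the three clauses preceding Definition~\ref{def:monitor}). If $q' \notin \dE^\nphi$ then $\Sem{u}{\phi} = \top$ and $\fL((q,q')) = \top$; if $q \notin \dE^\phi$ then $\Sem{u}{\phi} = \bot$ and $\fL((q,q')) = \bot$; and if $q \in \dE^\phi$ and $q' \in \dE^\nphi$ then $\Sem{u}{\phi}$ is neither $\bot$ nor $\top$, hence $?$, matching $\fL((q,q')) = ?$. A small point to check here is that $\fL$ is well defined on the states actually reachable by some $u$: the first two clauses cannot apply at once, for $\Sem{u}{\phi}$ is single valued and cannot equal both $\top$ and $\bot$; equivalently, no $u$ is rejected by both $\nfaA^\phi$ and $\nfaA^\nphi$.

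The one genuinely non-mechanical step is the final transfer to the minimised machine $\fsmA^\phi$. I would argue that Moore-machine minimisation merges only behaviourally equivalent states carrying the same output label, and therefore leaves the induced word-to-output function $u \mapsto \fL(\fdelta(\fq_0,u))$ unchanged; the identity established for the product FSM thus carries over verbatim to $\fsmA^\phi$. This is where I expect the only real care to be needed, everything prior being a direct concatenation of the equivalences collected in the construction.
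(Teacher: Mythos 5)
Your proof is correct and follows exactly the route the paper (and the cited sources) intend: the theorem is the direct assembly of the chain of equivalences established in the construction — the per-state emptiness characterisations, the language-preserving determinisation, the componentwise product, and the case analysis on $\fL$ — together with the observation that minimisation preserves the word-to-output function. Nothing is missing; the side remarks about the clauses of $\fL$ being mutually exclusive on reachable states and about minimisation merging only equally-labelled equivalent states are exactly the right points to flag.
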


\tikzstyle{decision} = [diamond, draw, fill=blue!20, 
    text width=4.5em, text badly centered, node distance=3cm, inner sep=0pt]
\tikzstyle{block} = [rectangle, draw, fill=blue!10, 
    text width=5em, text centered, rounded corners, minimum height=3em]
\tikzstyle{blob} = [circle, draw, fill=blue!10, 
    text width=1.8em, text centered, rounded corners, minimum height=1.8em]
\tikzstyle{line} = [draw, -latex']
\tikzstyle{cloud} = [draw, ellipse,fill=red!20, node distance=3cm,
    minimum height=2em]
    
\newcommand{\algUntimed}{%
\scalebox{0.7}{%
\begin{tikzpicture}[node distance = 2cm, auto]
    \node [blob] (phiInit) {$\phi$};
    \node [blob, yshift = 5.5mm, right of=phiInit] (phi) {$\phi$} 
       edge [<-] (phiInit);
    \node [blob, right of=phi] (Aphi) {$\cA_\phi$}  edge [<-] (phi);
    \node [blob, right of=Aphi] (Fphi) {$\cF_\phi$}  edge [<-] (Aphi);
    \node [blob, right of=Fphi] (AHphi) {$\hat\cA_\phi$}  edge [<-] (Fphi);
    \node [blob, right of=AHphi] (ATphi) {$\tilde\cA_\phi$}  edge [<-] (AHphi);
    \node [blob, yshift = -5.5mm, right of=phiInit] (nphi) {$\nphi$}  edge [<-] (phiInit);
    \node [blob, right of=nphi] (Anphi) {$\cA_\nphi$}  edge [<-] (nphi);
    \node [blob, right of=Anphi] (Fnphi) {$\cF_\nphi$}  edge [<-] (Anphi);
    \node [blob, right of=Fnphi] (AHnphi) {$\hat\cA_\nphi$}  edge [<-] (Fnphi);
    \node [blob, right of=AHnphi] (ATnphi) {$\tilde\cA_\nphi$}  edge [<-] (AHnphi);
    \node [blob, yshift = -5mm, right of=ATphi] (Mphi) {$\cM_\phi$} 
    edge [<-] (ATphi)
    edge [<-] (ATnphi);
    \node [block,  yshift = 5.5mm, above of=phiInit] (Input) {Input};
    \node [block, above of=phi] (Formula) {Formula};
    \node [block, above of=Aphi] (NBA) {NBA};
    \node [block, above of=Fphi] (EPS) {Emptiness per state};
    \node [block, above of=AHphi] (NFA) {NFA};
    \node [block, above of=ATphi] (DFA) {DFA};
    \node [block,  yshift = 5.5mm, above of=Mphi] (FSM) {FSM};
\end{tikzpicture}
}}

 \begin{figure}[t]
\begin{center}
  \algUntimed\vspace{-3.5ex}
\end{center}
\caption{The procedure for getting $\Sem{u}{\phi}$ for a given
  $\phi$.\label{tab:proc}}%
\end{figure}

\paragraph{Monitoring \LTLthreeTraces over Mazurkiewicz Traces.}

Let us now define our 3-valued semantics, denoted by  \LTLthreeTraces, with the
set of truth values $\Bthree = \{ \bot, ?, \top\}$ similar as above: 
\begin{definition}[3-valued semantics of LTrL]
  \label{def:ltl3semantics}
  Let $u \in \Sigma^\ast$ denote a finite word.  The \emph{truth value}
  of a \LTLthreeTraces formula $\varphi$ \wrt $u$, denoted by $\SemTracePP{u}{\phi}$,
  is an element of $\Bthree$ defined by%
  \begin{equation*}
    \SemTracePP{u}{\phi} = 
    \left\{
      \begin{array}{l@{\quad}l}
        \top    & \textrm{if}\ \forall{\sigma\in\Sigma^\omega}: T_{u\sigma} \models \varphi\\[0.5ex]
        \bot    & \textrm{if}\ \forall{\sigma\in\Sigma^\omega}: T_{u\sigma} \not\models \varphi\\[0.5ex]
        ?       & \textrm{otherwise}.
      \end{array}
    \right.
  \end{equation*} %
\end{definition}

\begin{lemma}
 3-valued LTrL coincides with 3-valued LTL when all actions are independent.
\end{lemma}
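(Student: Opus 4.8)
The plan is to reduce the asserted equality of verdicts to a single pointwise statement about the two underlying satisfaction relations. Assume the hypothesis literally: every pair of distinct actions is independent, so $D$ is the diagonal and the covering relation links only equally-labelled events; consequently $\approx$ identifies exactly those words having, for each letter, the same number of occurrences, and $\str{u\sigma}$ is a disjoint union of one chain per letter. Observe that $\Sem{u}{\phi}$ and $\SemTracePP{u}{\phi}$ are defined by the very same case distinction over continuations $\sigma\in\Sigma^\omega$; the only difference is that the $\top$/$\bot$ clauses test the word model $u\sigma\models\phi$ in one case and the trace model $\str{u\sigma}\models\phi$ in the other. Hence it suffices to establish the key claim: for all $\phi\in\LTL(\Sigma,I)$, all $u\in\Sigma^\ast$ and all $\sigma\in\Sigma^\omega$, we have $u\sigma\models\phi$ iff $\str{u\sigma}\models\phi$. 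Granting this, the three quantified conditions defining $\top$, $\bot$ and $?$ match verbatim, so $\Sem{u}{\phi}=\SemTracePP{u}{\phi}$ for every $u$, which is exactly the coincidence claimed.

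I would prove the key claim by structural induction on $\phi$, interpreting both sides at matching positions, using the run map $\rho:\prf(u\sigma)\rightarrow\conf$ that links each prefix $v$ of $u\sigma$ to the configuration $c_v$ of $\str{u\sigma}$ it reaches. The base case $\True$ and the Boolean cases $\neg\phi$ and $\phi\vee\psi$ are immediate: satisfaction is given by the same propositional connectives on both sides and the induction hypothesis transports each subformula unchanged. For the until case $\phi\Uu\psi$ the plan is to match the configuration-based witness $c'$ with $c\subseteq c'$ on the trace side against the positional witness on the word side through $\rho$, so that the two requirements — every intermediate point satisfies $\phi$ and the endpoint satisfies $\psi$ — line up under the induction hypothesis.

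The main obstacle is the next-operator $\nexta{a}\phi$, and this is precisely the step where the hypothesis that all actions are independent must carry the argument. On the word side $\str{u\sigma}$ is a total order, so $c\moveto{a}c'$ forces $a$ to be the \emph{literal} next symbol; on the trace side, with $a$ independent of every other action, an $a$-labelled event $e$ is enabled at a configuration $c$ as soon as all proper elements of $\View e$ already lie in $c$, and since distinct letters are incomparable this happens as soon as \emph{some} $a$ occurs ahead, regardless of its position in the linearisation. Thus the configuration reached by an $a$-step on the trace side need not correspond to consuming the next symbol on the word side, and reconciling the two readings is the crux of the whole lemma. I expect this to be the genuinely delicate point: one must show that the partial-order "next" and the linearisation "next" are indistinguishable to $\phi$ under full independence, and it is exactly here — where each letter evolves in its own chain and order information degenerates to occurrence counts — that the independence hypothesis is indispensable and the real content of the statement resides.
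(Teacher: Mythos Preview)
Your reading of the hypothesis is the literal one --- every pair of distinct actions is independent --- but that statement is almost certainly a slip in the paper: a few paragraphs earlier the authors identify LTL over words with $\LTL(\Sigma,\emptyset)$, i.e.\ the case where all actions are \emph{dependent}. Under $I=\emptyset$ every trace is a total order, $T_{u\sigma}$ is canonically the word $u\sigma$, and the two 3-valued semantics coincide pointwise by definition. That is why the paper offers no proof at all for this lemma; it is an immediate corollary of the earlier identification.

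Under the reading you adopt the lemma is in fact \emph{false}, and your own ``main obstacle'' is not a delicate point to be reconciled but an outright counterexample. Take $|\Sigma|\geq 2$, pick $b\neq a$, and let $\phi=\nexta{a}\True$. On the word side $\Sem{b}{\phi}=\bot$: the first letter of $b\sigma$ is $b$ for every continuation $\sigma$, so no extension satisfies $\phi$. On the trace side with full independence, the minimal $a$-event has empty strict past (its only predecessors are other $a$-events), hence is enabled at the empty configuration whenever the word contains an $a$ at all; thus $T_{b\,a^\omega}\models\phi$ while $T_{b\,b^\omega}\not\models\phi$, giving $\SemTracePP{b}{\phi}=\,?$. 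Your key claim $u\sigma\models\phi\iff T_{u\sigma}\models\phi$ already fails for this $\phi$, and no amount of structural induction can repair it: the trace ``next'' and the word ``next'' are genuinely different operators when $I$ is full. The confusion in your final paragraph (``on the word side $T_{u\sigma}$ is a total order'') reflects this --- with full independence $T_{u\sigma}$ is \emph{not} a total order; that only happens when $I=\emptyset$.
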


Assume we have observed a sequence of actions $u$. This yields the configuration $c_u$ of any trace $T$ with $c_u \subset T$. As $c_u = c_{v}$ for any $u \approx v$, we get that, for any $w\in\Sigma^\omega$, that $T_{uw} = T_{vw}$ and thus $T_{uw} \models \phi$ iff $T_{vw} \models \phi$.
In other words, we get that \LTLthreeTraces gives the same verdict of all linearizations of $c_u$ and thus for all words equivalent to $u$.

\begin{theorem}
Let $\phi \in \LTLthreeTraces$ and $u \in \Sigma^\ast$, 
  \[ \SemTracePP{u}{\phi} = \SemTracePP{v}{\phi} \]
  for all $v \in [u]_{\approx}$. 
\end{theorem}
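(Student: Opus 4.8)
The plan is to reduce the equality of the two three-valued verdicts to the equality of the underlying traces, and then to establish that $\approx$ is preserved under appending an arbitrary infinite suffix. Fix $v \in [u]_{\approx}$, so that $u \approx v$. Inspecting Definition~\ref{def:ltl3semantics}, the verdict $\SemTracePP{u}{\phi}$ is completely determined by the set $\{\sigma \in \Sigma^\omega \mid T_{u\sigma} \models \phi\}$: the value is $\top$ when this set equals all of $\Sigma^\omega$, $\bot$ when it is empty, and $?$ otherwise. Hence it suffices to show that the analogous sets for $u$ and for $v$ coincide, and for this it is enough to prove $T_{u\sigma} = T_{v\sigma}$ for every $\sigma \in \Sigma^\omega$.

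First I would prove the key congruence step: if $u \approx v$, then $u\sigma \approx v\sigma$ for every $\sigma \in \Sigma^\omega$. By the projection characterisation of $\approx$, the hypothesis $u \approx v$ means $u \proj p = v \proj p$ for every $D$-clique $p$. Since projection distributes over concatenation, $(u\sigma) \proj p = (u \proj p)(\sigma \proj p)$, and likewise for $v$; substituting $u \proj p = v \proj p$ yields $(u\sigma) \proj p = (v\sigma) \proj p$ for every $D$-clique $p$, i.e.\ $u\sigma \approx v\sigma$.

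From $u\sigma \approx v\sigma$ I conclude $[u\sigma]_{\approx} = [v\sigma]_{\approx}$, and since $T_w$ is defined as the (unique, up to isomorphism) trace associated with $[w]_{\approx}$, this gives $T_{u\sigma} = T_{v\sigma}$. Consequently $T_{u\sigma} \models \phi$ iff $T_{v\sigma} \models \phi$, so the two satisfaction sets are identical, and the verdicts agree by the case analysis above. This matches the informal argument preceding the theorem, which routes the same conclusion through the configuration equality $c_u = c_v$; the projection-based formulation above simply makes the step self-contained from the definition of $\approx$.

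The only genuinely load-bearing step is the congruence claim $u \approx v \Rightarrow u\sigma \approx v\sigma$; everything else is a direct rewriting of Definition~\ref{def:ltl3semantics}. That step is nonetheless routine once the projection definition of $\approx$ is used, the single subtlety being to verify that it holds uniformly for \emph{infinite} suffixes $\sigma$ and not merely for finite ones---which the argument that projection distributes over concatenation handles without change, as each projection $(u\sigma)\proj p$ factors as a prefix $u \proj p$ followed by the tail $\sigma \proj p$.
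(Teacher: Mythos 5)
Your proof is correct and follows essentially the same route as the paper's own (informal) argument: both reduce the claim to showing $T_{u\sigma} = T_{v\sigma}$ for every infinite suffix $\sigma$ and then read off the verdict from Definition~\ref{def:ltl3semantics}. Your projection-based derivation of $u\sigma \approx v\sigma$ simply makes explicit the congruence step that the paper delegates to the configuration equality $c_u = c_v$, and is if anything the more self-contained formulation.
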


In other words, any monitor following the semantics of \LTLthreeTraces 
yields the same value for $v$ if seen $u$, when $v \approx u$.

Now, we turn our focus to the monitoring procedure. Our main goal is to understand that we can follow the approach for \LTLthree but starting with the Büchi automaton for an \LTLthreeTraces formula according to Theorem~\ref{thm_main}. To this extent, let us consider the automaton for any LTrL formula in more detail.

Reading a finite linearization $u$ yields sets of states in the automaton. The languages accepted from that states characterize together with the prefix $u$ the traces satisfying $\phi$ with having $c_u$ as configuration. 

We call this language $L_u$.

\begin{lemma}
$L_u = L_v$, if $u$ and $v$ are equivalent. 
\end{lemma}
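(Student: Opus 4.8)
The plan is to identify $L_u$ with the residual (left quotient) $\{w \in \Sigma^\omega \mid uw \in \langP{\phi}\}$ and then to transport the equivalence $u \approx v$ across concatenation, relying on the trace-closedness established in Theorem~\ref{thm_main}.

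First I would make precise the definition of $L_u$. Reading $u$ in the Büchi automaton $\cA_\phi$ from Theorem~\ref{thm_main} yields the set of states $\delta^\phi(Q_0^\phi, u) = \{q_1, \dots, q_l\}$, and I take $L_u = \bigcup_{i=1}^{l} \langP{\cA_\phi(q_i)}$. The first step is the standard observation that, for a Büchi automaton, this union coincides with the residual language: an accepting run of $\cA_\phi$ on $uw$ must, after reading $u$, be in some state $q \in \delta^\phi(Q_0^\phi, u)$, and since the Büchi condition depends only on the states visited infinitely often (hence only on the infinite suffix), the whole run accepts iff the continuation is an accepting run of $\cA_\phi(q)$ on $w$. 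Thus $uw \in \langP{\cA_\phi}$ iff $w \in \langP{\cA_\phi(q)}$ for some such $q$, giving $L_u = \{ w \in \Sigma^\omega \mid uw \in \langP{\cA_\phi}\}$, and by Theorem~\ref{thm_main}, $\langP{\cA_\phi} = \langP{\phi}$, so $L_u = \{ w \mid uw \in \langP{\phi}\}$.

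The second step reduces $L_u = L_v$ to membership in $\langP{\phi}$: it suffices to show that for every $w \in \Sigma^\omega$ we have $uw \in \langP{\phi}$ iff $vw \in \langP{\phi}$. Here I would invoke the preceding theorem, which gives $T_{uw} = T_{vw}$ for all $w$ whenever $u \approx v$ (via $c_u = c_v$). Since $\langP{\phi}$, viewed as an $\omega$-language, is exactly $\bigcup\{\linP{T} \mid T \models \phi\}$, membership $uw \in \langP{\phi}$ is equivalent to $T_{uw} \models \phi$. Therefore $uw \in \langP{\phi}$ iff $T_{uw} \models \phi$ iff $T_{vw} \models \phi$ iff $vw \in \langP{\phi}$, which yields $L_u = L_v$.

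If one prefers a purely word-theoretic route avoiding traces, the same conclusion follows by noting that $\approx$ is a right congruence, i.e. $u \approx v$ implies $uw \approx vw$ for all $w$ (project both sides onto an arbitrary $D$-clique $p$ and use $u{\proj}p = v{\proj}p$), combined with the trace-closedness of $\langP{\phi}$ from Theorem~\ref{thm_main}. I expect the only genuinely technical point to be the first step, namely justifying that the languages accepted from the states reached by $u$ capture precisely the residual language; the transport of $\approx$ across concatenation and the appeal to trace-closedness are immediate from the definitions and from results already established.
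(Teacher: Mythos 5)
Your proof is correct and follows essentially the same route as the paper's: both rest on identifying $L_u$ with the residual language $\{w \mid uw \in \cL(\cA_\phi)\}$, on the fact that $u \approx v$ forces $T_{uw} = T_{vw}$, and on the trace-closedness of $\cL(\cA_\phi)$ from Theorem~\ref{thm_main}; the paper merely phrases it as a proof by contradiction and leaves the residual-language identification implicit, which you spell out explicitly.
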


\begin{proof}
Assume the contrary: Then there is $w$ such that, wlog, $w$ is in $L_u$ but not in $L_v$. As such $uw \in \cL(\cA)$ but $vw \notin \cL(\cA)$. But, as $u \approx v$, we have $T_{uw} \models \phi$ iff $T_{vw} \models \phi$. As $\cA$ is trace closed, $uw \in \cL(\cA)$ iff $vw \in \cL(\cA)$. Contradiction.
\end{proof}

Note that the idea of solely relying on the states reached when reading a prefix $u$, was called \emph{forgettable past} in \cite{DBLP:conf/atva/DongLS08-double}.

\begin{corollary}
$L_u = \emptyset$ iff for all $w$ we have $T_{uw} \not \models \phi$.
\end{corollary}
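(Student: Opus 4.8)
The plan is to show that the corollary is essentially an unfolding of the definition of $L_u$ together with the defining property of the Büchi automaton $\cA = \cA_\phi$ supplied by Theorem~\ref{thm_main}; no new machinery beyond these is needed, and in particular the preceding lemma $L_u = L_v$ is not required here.

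First I would make the informal description of $L_u$ precise. Let $\delta(Q_0, u) = \{q_1, \dots, q_l\}$ be the set of states reached after reading the finite linearisation $u$. Then $L_u$ is the union of the languages accepted from these states, i.e.\ $L_u = \bigcup_{i=1}^l \cL(\cA(q_i))$, using the notation $\cA(q)$ for the automaton obtained from $\cA$ by redefining its set of initial states to $\{q\}$.

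The key step is the identity $L_u = \{\, w \in \Sigma^\omega \mid uw \in \cL(\cA) \,\}$. For the right-to-left inclusion, an accepting run of $\cA$ on $uw$ passes, after consuming $u$, through some state $q \in \delta(Q_0, u)$; its infinite suffix is then an accepting run of $\cA(q)$ on $w$, so $w \in L_u$. For the converse, an accepting run of $\cA(q_i)$ on $w$ can be prefixed by a run of $\cA$ on $u$ ending in $q_i$ (which exists since $q_i \in \delta(Q_0,u)$), yielding an accepting run of $\cA$ on $uw$. This is the one place where the argument touches the actual mechanics of Büchi acceptance, and I expect it to be the main (though routine) obstacle: one must check that splitting and splicing runs at the boundary between $u$ and $w$ preserves acceptance, which holds because $\Inf$ of a run depends only on its infinite suffix.

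With this identity in hand I would invoke Theorem~\ref{thm_main}, which gives $\cL(\cA) = \langP{\phi} = \bigcup\{\linP{T} \mid T \models \phi\}$. Since $uw$ is by construction a linearisation of exactly the trace $T_{uw}$, membership $uw \in \cL(\cA)$ is equivalent to $T_{uw} \models \phi$. Finally I chain the equivalences: $L_u = \emptyset$ holds iff no $w$ satisfies $uw \in \cL(\cA)$, iff no $w$ satisfies $T_{uw} \models \phi$, i.e.\ iff $T_{uw} \not\models \phi$ for all $w$, which is exactly the claim.
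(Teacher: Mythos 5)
Your proof is correct and matches the argument the paper leaves implicit: the corollary is stated without proof, and your route---making $L_u$ precise as $\{\,w \mid uw \in \cL(\cA)\,\}$ via splitting and splicing runs at the $u$/$w$ boundary, then applying Theorem~\ref{thm_main} together with the observation that $uw$ linearises exactly the trace $T_{uw}$---is precisely the intended justification. The only caveat is that your run-splicing step presupposes the nondeterministic (rather than alternating) form of $\cA_\phi$, which is indeed the form the paper works with in its monitor construction after the standard alternation-elimination step.
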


In other words, we can follow the schema for \LTLthree to build a monitor for LTrL. We follow the construction shown in Fig.~\ref{tab:proc} and conclude, using the notation before, 
with the following correctness theorem.
\begin{theorem}
  \label{thm:traces}
  Let 
  $\phi \in \LTLthreeTraces$, and let $\fsmA^\phi = (\Sigma, \fQ, \fq_0, \fdelta,
  \fL)$  be the corresponding monitor.
  Then, for all $u \in \Sigma^\ast$: %
  $\SemTracePP{u}{\phi} = \fL(\fdelta(\fq_0,u))$.
\end{theorem}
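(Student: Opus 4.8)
The plan is to carry the correctness argument of Theorem~\ref{thm:untimed} over almost verbatim, with the single conceptual change that the word-level automaton for $\phi$ is replaced by the trace-consistent Büchi automaton $\cA_\phi$ supplied by Theorem~\ref{thm_main}, and the word satisfaction $u\sigma\models\phi$ is replaced by the trace satisfaction $T_{u\sigma}\models\phi$. Concretely, I would fix $\phi\in\LTLthreeTraces$, take the automaton $\cA_\phi$ with $\langP{\cA_\phi}=\langP{\phi}$ and, symmetrically, the automaton $\cA_\nphi$ with $\langP{\cA_\nphi}=\langP{\nega\phi}$ (both exist by Theorem~\ref{thm_main} applied to $\phi$ and to $\nega\phi$). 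If Theorem~\ref{thm_main} delivers these in alternating form, I would first translate them into nondeterministic Büchi automata, so that the emptiness-per-state pipeline of Fig.~\ref{tab:proc} applies unchanged. From that point on, the objects $\cF^\phi,\nfaA^\phi,\dfaA^\phi$ and their $\nega\phi$-counterparts, and finally the product monitor $\fsmA^\phi$ of Definition~\ref{def:monitor}, are built exactly as in the word case.

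First I would establish the two emptiness characterisations that drive the whole construction. Read as an $\omega$-language, $\langP{\phi}=\bigcup\{\linP{T}\mid T\models\phi\}$, and every $u\sigma\in\Sigma^\omega$ is a linearisation of exactly one trace, namely $T_{u\sigma}$; hence $u\sigma\in\langP{\cA_\phi}$ holds iff $T_{u\sigma}\models\phi$. This is precisely the content of the corollary stated just before the theorem: the language $L_u$ accepted from the states $\delta^\phi(Q_0^\phi,u)$ reached on $u$ is empty iff $T_{uw}\not\models\phi$ for all $w$, i.e.\ iff $\SemTracePP{u}{\phi}=\bot$; equivalently, $\SemTracePP{u}{\phi}\neq\bot$ iff some state reachable on $u$ in $\cA_\phi$ has a nonempty language. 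Running the identical argument for $\cA_\nphi$ yields $\SemTracePP{u}{\phi}\neq\top$ iff some state reachable on $u$ in $\cA_\nphi$ has a nonempty language. Defining $\E^\phi=\{q\mid\cF^\phi(q)=\top\}$ and $\E^\nphi=\{q\mid\cF^\nphi(q)=\top\}$ then turns these into the finite-word statements $u\in\langP{\nfaA^\phi}$ iff $\SemTracePP{u}{\phi}\neq\bot$ and $u\in\langP{\nfaA^\nphi}$ iff $\SemTracePP{u}{\phi}\neq\top$.

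Determinising to $\dfaA^\phi,\dfaA^\nphi$ and forming $\fsmA^\phi$ with output $\fL$ as in Definition~\ref{def:monitor}, a three-way case analysis closes the argument: if $u\notin\langP{\dfaA^\nphi}$ then $\SemTracePP{u}{\phi}=\top$ and $\fL$ outputs $\top$; if $u\notin\langP{\dfaA^\phi}$ then $\SemTracePP{u}{\phi}=\bot$ and $\fL$ outputs $\bot$; otherwise both automata accept $u$, so $\SemTracePP{u}{\phi}$ is $?$ and $\fL$ likewise outputs $?$. In each case $\fL(\fdelta(\fq_0,u))=\SemTracePP{u}{\phi}$, as claimed. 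The only genuinely trace-specific step---and hence the one to treat with care---is the bridging equivalence $u\sigma\in\langP{\cA_\phi}\Leftrightarrow T_{u\sigma}\models\phi$ (and its $\nega\phi$-analogue), which is exactly where trace-closedness of $\cA_\phi$ from Theorem~\ref{thm_main} is consumed: it is what guarantees that a single-word membership test faithfully decides the trace-level condition. Everything else is the word-level reasoning of Theorem~\ref{thm:untimed}. That the resulting verdict additionally agrees on all linearisations of $u$ is not needed for correctness on a fixed $u$ and follows separately from the theorem asserting $\SemTracePP{u}{\phi}=\SemTracePP{v}{\phi}$ for $v\approx u$.
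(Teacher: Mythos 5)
Your proposal is correct and follows essentially the same route as the paper, which likewise instantiates the \LTLthree pipeline of Fig.~\ref{tab:proc} with the trace-consistent (alternating, then nondeterminised) B\"uchi automata for $\phi$ and $\nega\phi$ from Theorem~\ref{thm_main} and appeals to the emptiness-per-state characterisation (the corollary that $L_u=\emptyset$ iff $T_{uw}\not\models\phi$ for all $w$). You correctly isolate the only trace-specific step, namely $u\sigma\in\langP{\cA_\phi}\Leftrightarrow T_{u\sigma}\models\phi$, which the paper leaves implicit; the rest is the word-level argument of Theorem~\ref{thm:untimed}, exactly as the paper intends.
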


Since the Büchi automaton for an LTL formula over traces is necessarily non-elementary \cite{Walukiewicz98}, the resulting monitor is also non-elementary. When the independence alphabet is empty, the construction from \cite{DBLP:journals/dke/BolligL03} aligns with traditional LTL methods, producing a Büchi automaton of single exponential size and, consequently, a monitor of double exponential size. Regardless of the case, since the final FSM is minimized, all monitors obtained through this approach are optimal in terms of size.

\section{Discussion}
\label{sec:discussion}

Trace-consistent properties have been successfully applied in model checking to enable \emph{partial order reduction}. The core idea of partial order reduction in model checking is to consider only a single execution (or a small representative subset) for all executions that are equivalent up to permutation of independent actions. Simply put, when actions $a$ and $b$ are independent, one considers only executions where $a$ happens before $b$, rather than also including the reverse order. Fixing such an order not only reduces the number of executions that need to be explored in the underlying system but also allows for pruning the automaton that accepts models of the underlying correctness specification.  

In runtime verification, where the underlying system is not under control, the monitor must be prepared to handle any possible ordering of independent actions. In other words, model checking is a \emph{white-box} technique that allows for more optimizations, whereas runtime verification operates as a \emph{black-box} technique, limiting the available reductions.  Still, it might be worthwhile to consider potential optimizations following ideas of reordering actions.

An alternative approach to the one presented here\footnote{This remark was communicated by one of the anonymous reviewers of this paper.} is to use an algorithm that checks whether a specification is Mazurkiewcz-trace closed. Then, one can simply use LTL and traditional automata constructions. Note that, as the final monitor is a minimized, unique Moore machine, the originating LTL formula may be non-elementary longer than an equivalent LTrL formula. In practice, readability may be of great importance and it is unclear whether compact formulas may be easier or more difficult to understand. A notable reference for checking whether an LTL specification is trace closed and also considers more general equivalence relations is \cite{PeledWilWol98}.

The method proposed here may work similarly for different notions of concurrency and corresponding notions of equivalences, suitable logics, and automata-based decision procedures, e.g.\ \cite{AlurPP95,Alur:1998:DGP,Thiagarajan94}.

\section{Conclusion}
\label{sec:conclusion}

In this paper, we have explored the use of (Mazurkiewicz) trace logics for monitoring concurrent systems. By incorporating concurrency as a first-class citizen within temporal logic, we have demonstrated that it is possible to obtain the same verdict based on any linearization that could have occurred instead of the observed one, thanks to the independence of actions. 

Several directions remain for future research. A key next step is the practical implementation and evaluation of our approach in real-world systems to assess its efficiency and applicability. Additionally, a more detailed comparison with existing methods that rely on generalizations of executions rather than built-in concurrency will help to clarify the advantages of our approach. Beyond this, we aim to explore past-time logics for Mazurkiewicz traces, as well as logics that extend beyond Mazurkiewicz traces to model concurrency in a broader range of systems.

\bibliographystyle{splncs04}
\bibliography{mybibliography}
\end{document}